\begin{document}

\newcommand{\calg}[1]{{\mathcal #1}}
\newcommand{\bfg}[1]{\text{\boldmath $#1$}}
\newcommand{\dcsAlgRef}[1]{Algorithm~\ref{alg:#1}} 
\newcommand{\dcsSecRef}[1]{Section~\ref{#1}} 
\newcommand{\dcsSSecRef}[1]{Section~\ref{ssec:#1}} 
\newcommand{\dcsSSSecRef}[1]{Section~\ref{sssec:#1}} 
\newcommand{\dcsFigRef}[1]{Figure~\ref{fig:#1}} 
\newcommand{\dcsTabRef}[1]{Table~\ref{#1}} 
\newcommand{\dcsEqRef}[1]{Eq.~\eqref{eq:#1}} 
\newcommand{\dcsIneqRef}[1]{Eq.~\eqref{eq:#1}} 
\newcommand{\dcsHRule}{\rule{\linewidth}{0.08em}}
\newtheorem{proposition}{Proposition}
\newtheorem{example}{Example}
\newtheorem{scenario}{Scenario}
\newtheorem{thm}{Theorem}[section]

\begin{center}
{\Huge A Game-Theoretic Approach to Distributed Coalition Formation in Energy-Aware Cloud Federations (Extended Version)}
\end{center}

\begin{center}
{\Large\bfseries
\ \\ 
Please, cite this paper as:
\begin{tabular}{c}
Marco Guazzone, Cosimo Anglano, Matteo Sereno,\\
\emph{``A Game-Theoretic Approach to Coalition Formation},\\
\emph{in Green Cloud Federations,''}\\
Proc. of the $14^\text{th}$ IEEE/ACM International Symposium\\
on Cluster, Cloud and Grid Computing (CCGrid), pp. 618--625, 2014.\\
DOI: \url{10.1109/CCGrid.2014.37}\\
{\normalsize Publisher: \url{http://doi.ieeecomputersociety.org/10.1109/CCGrid.2014.37}}
\end{tabular}
}
\end{center}

\newpage

\title{A Game-Theoretic Approach to Distributed Coalition Formation in Energy-Aware Cloud Federations (Extended Version)}

\author{
\IEEEauthorblockN{Marco Guazzone\IEEEauthorrefmark{1}, Cosimo Anglano\IEEEauthorrefmark{2}, Matteo Sereno\IEEEauthorrefmark{1}}
\IEEEauthorblockA{\IEEEauthorrefmark{1}Department of Computer Science, University of Torino, Italy}
\IEEEauthorblockA{\IEEEauthorrefmark{2}Department of Science and Technological Innovation, University of Piemonte Orientale, Italy}
}
\IEEEspecialpapernotice{Please, cite as:\\ \textbf{Marco Guazzone, Cosimo Anglano, Matteo Sereno,\\ ``Game-Theoretic Approach to Coalition Formation in Green Cloud Federations,''\\ Proc. of the $14^\text{th}$ IEEE/ACM International Symposium on Cluster, Cloud and Grid Computing (CCGrid), pp. 618--625, 2014,\\ DOI: \url{10.1109/CCGrid.2014.37},\\ Publisher: \url{http://doi.ieeecomputersociety.org/10.1109/CCGrid.2014.37}.}}
\maketitle

\begin{abstract}
Federations among sets of Cloud Providers (CPs), whereby a set of CPs agree to
mutually use their own resources to run the VMs of other CPs, are considered a
promising solution to the problem of reducing the energy cost.
In this paper, we address the problem of federation formation for a set of CPs,
whose solution is necessary to exploit the potential of cloud federations for
the reduction of the energy bill.
We devise a distributed algorithm, based on cooperative game theory, that allows
a set of CPs to cooperatively set up their federations in such a way that their
individual profit is increased with respect to the case in which they work in
isolation, and we show that, by using our algorithm and the proposed CPs'
utility function, they are able to self-organize into Nash-stable federations
and, by means of iterated executions, to adapt themselves to environmental
changes.
Numerical results are presented to demonstrate the effectiveness of the
proposed algorithm.
\end{abstract}

\begin{IEEEkeywords}
Cloud Federation, Cooperative Game Theory, Coalition Formation
\end{IEEEkeywords}

\IEEEpeerreviewmaketitle

\section{Introduction}

Many modern Internet services are implemented as cloud
applications consisting of a set of \emph{Virtual Machines} (VMs)
that are allocated and run on a physical computing infrastructure
managed by a virtualization platform (e.g., Xen~\cite{xen}, VMware~\cite{vmware}, etc.).
These infrastructure are typically owned
by a \emph{Cloud Provider} (CP) (e.g., Amazon AWS, Rackspace, Windows Azure, etc.),
and are located into a (set of possibly distributed) data center(s).

One of the key issues that must be faced by a CP is the
reduction of its energy cost, that represents
a large fraction of the total cost of ownership for
physical computing infrastructures~\cite{energystar}.
This cost is mainly due to the consumption of the
physical resources that must be switched on to run the
workload.

To reduce energy consumption, two techniques are therefore possible for a CP:
(a) to minimize the number of hosts that are switched on by maximizing
the number of VMs allocated on each physical resource (using
suitable resource management techniques~\cite{Guazzone-2011-EEResMgmt,Albano-2013}),
and (b) to use resources that consume less energy.

Cloud federations~\cite{rochwerger:reservoir-computer}, whereby a set of CPs agree to mutually
use their own resources to run the VMs of other CPs, are considered to be a promising solution
for the reduction of  energy costs~\cite{Celesti-2013} as they ease the application
of both techniques.

As a matter of fact, while each individual CP is bound to its specific energy provider
and to the physical infrastructure it owns, a set of federated CPs may enable
the usage of more flexible energy management strategies that,
by suitably relocating the workload towards CPs that pay less for the energy, or that have more 
energy-efficient resources, may reduce the energy bill for each one of them. 

In order to exploit the energy saving potential of cloud federations,
it is however necessary to address the question concerning its formation.
As a matter of fact, it is unreasonable to assume that a CP unconditionally
joins a federation regardless of the benefits it receives, while it
is reasonable to expect that it joins a federation only if this brings it
a benefit.

In this paper, we address the problem of federation formation for a set of CPs,
and we devise an algorithm that allows these CPs to decide whether to federate
or not on the basis of the profit they receive for doing so.
In our approach, each CP pays for the energy consumed by each VM, whether it
belongs to its own workload or to the one of another CP, but receives a payoff
(computed as discussed later) for doing so.

The algorithm we propose is based on cooperative game
theory~\cite{Peleg-2007-CooperativeGames,Book_RAY2007}. 
In particular, we rely on \emph{hedonic games} (see \cite{BOGOMONLAIA-JACKSON_2002} for 
their definition)
whereby each CP bases its decision on its own preferences.
Depending on the specific operational conditions of each CP
(i.e., the resource requirements of its workload, its cost of energy,
the energy consumption of its physical machines, and the revenue it obtains
when running each VM on these machines), different
federations (each one consisting of a subset of the CPs), or even no federation at all, may be
formed by the involved CPs.
We call \emph{federation set} the set of distinct federations
formed by a set of CPs.

The algorithm we propose computes the federation set that
results in the highest profit that can be achieved by a set of autonomous and selfish CPs.
This derives from the fact that this algorithm ensures that all the federations
formed by groups of CPs are \emph{stable}, that is CPs have no incentive
to leave the federation once they decide to participate.

Unlike similar proposals (e.g., \cite{Mashayekhy-2012-Coalitional}), that
rely on a centralized architecture in which a trusted third party computes
the federation set,
we adopt a distributed approach in which each CP
autonomously and selfishly makes its own decisions,
and the best solution emerges from these decisions without the
need of synchronizing them, or to resort to a trusted third party.
In this way, we avoid two drawbacks that affect existing proposals,
namely the difficulty of finding a third party that is trusted by
all the CPs, and the need to suspend the operations of all the CPs
when the federation set is being computed.

The rest of this paper is organized as follows.
In \dcsSecRef{FORMULATION}, we describe the system under study, and provide some
simple motivating example.
In \dcsSecRef{GAMETHEORY1}, we present a cooperative game-theoretic model of the
system under study, and show stability conditions and profit allocation strategies
that provide the theoretical foundation for the distributed coalition formation
algorithm, that is also presented in this section. 
In \dcsSecRef{NUMERICAL}, we present results from an experimental evaluation to show the effectiveness of the proposed approach.
In \dcsSecRef{sec:related}, we provide a short overview of related works.
Finally, conclusions and an outlook on future extensions are presented in \dcsSecRef{CONCL}.

\section{Problem Formulation\label{FORMULATION}}

In this section, we first formally describe the problem addressed in the
paper (see \dcsSecRef{SYSTEM}), and then we illustrate some issues that must
be properly addressed in order to properly solve it (see \dcsSecRef{EXAM1}).
\subsection{System Description\label{SYSTEM}}

We consider a set of $n$ CPs denoted by
$\calg{N}=\bigl\{1,2,\ldots,n\bigr\}$, where each CP $i$ is endowed with a set
$\calg{H}_i$ of physical hosts.
We denote as $\calg{H}=\calg{H}_1\cup\calg{H}_2\cup\cdots\cup\calg{H}_n$
the set of all the hosts collectively belonging to the various CPs.
These hosts are grouped into a set $\calg{G}$
of \emph{host classes} according to their processor type and to the amount of
physical memory they provide; all the hosts in the same class are homogeneous
in terms of processor and memory size.
For any $h \in \calg{H}$ we denote by $g(h)$ the function that gives the
host class of $h$ (i.e., a function $g: \calg{H} \rightarrow  \calg{G}$).

As discussed in~\cite{Rivoire-2008-Comparison}, we assume that a host $h$ consumes
$C^{\text{min}}_{g(h)}$ Watts when its CPU is in the idle state, $C^{\text{max}}_{g(h)}$ Watts
when its CPU is fully utilized, and
$\Bigl(C^{\text{min}}_{g(h)}+f\cdot\bigl(C^{\text{max}}_{g(h)}-C^{\text{min}}_{g(h)}\bigr)\Bigr)$ when
a fraction $f\in[0,1]$ of its CPU capacity is used.
This model, albeit simple, has been shown to provide accurate estimates of
power consumption for different host types when running several benchmarks
representative of real-world applications~\cite{Rivoire-2008-Comparison}.

Physical hosts run cloud workloads, consisting in a set
$\calg{J}=\calg{J}_1 \cup \calg{J}_2 \cup \cdots \cup \calg{J}_n$ of
VMs, where $\calg{J}_i$ denotes the set of VMs that
compose the workload of the $i$-th CP (each VM contains the whole execution
environment of one or more applications).

As typically done by CPs,
VMs are grouped into a set $\calg{Q}$ of \emph{VM classes} according to
the computing capacity provided by their virtual processors, and to the amount of
physical memory they are equipped with; all the VMs belonging to the same class
provide the same amount of computing capacity and of physical memory.
For instance, Amazon's EC2~\cite{EC2} defines the \emph{Elastic Compute
Unit (ECU)} as an abstract computing resource able to deliver a capacity
equivalent to that of a 1.2 GHz 2007 Xeon processor, and provides
various \emph{instance types} (that are equivalent to our VM classes) that differ
among them in the number of ECUs and in the amount of RAM they are equipped with.
More specifically, \emph{small}, \emph{medium}, and \emph{large}, corresponding
to VM class 1, 2, and 3, respectively, provide 1 ECU and 1.7 GB of RAM,
2 ECUs and 3.7 GB of RAM, and 4 ECUs and 7.5 GB of RAM, respectively.

For any VM $j \in \calg{J}$, we denote by $q(j)$ the function that gives its
VM class (i.e., a function $q: \calg{J} \rightarrow  \calg{Q}$).
Using this notation, for any $j \in \calg{J}$ we denote by
$\mathit{CPU}_{q(j)}$ and by $\mathit{RAM}_{q(j)}$ the amount of
computing capacity and of physical memory of VM $j$, respectively.
As an example, in the Amazon EC2 case we have that
$\mathit{CPU}_{1} = 1$ ECU and $\mathit{RAM}_{1}= 1.7$ GB, while
$\mathit{CPU}_{3} = 4$ ECU and $\mathit{RAM}_{3} = 7.5$ GB.

When allocated on a physical host $h$, a VM $j$ uses a certain fraction
$A_{q(j),g(h)}$ of CPU capacity and a certain fraction $M_{q(j),g(h)}$ of
physical memory.
$A_{q(j),g(h)}$ can be determined by measuring,
with a suitable benchmark (e.g., GeekBench~\cite{GeekBench}), the computing capacity $\mathit{Cap}_{v}$ delivered by
the virtual processor of VMs in $q(j)$
and the capacity $\mathit{Cap}_{p}$ delivered by the physical processor of hosts in $g(h)$, and then by dividing these quantities,
i.e., $A_{q(j),g(h)} = \frac{\mathit{Cap}_{v}}{\mathit{Cap}_{p}}$.
For instance, if $\mathit{Cap}_{v} = 1,000$ and $\mathit{Cap}_{p}=8,000$, then
$A_{q(j),g(h)} = 0.125$.
$M_{q(j),g(h)}$ can instead be computed as $M_{q(j),g(h)} = \Bigl\lceil\frac{\mathit{RAM}_{q(j)}}{\text{RAM size of hosts in $g(h)$}}\Bigr\rceil$.

Each CP $i$ charges, for each VM $j$, a \emph{revenue rate} (that depends on the
class $q(j)$ of that VM) that specifies the amount of money that the VM owner
must correspond per unit of time.
For instance, Amazon charges $0.08$ \$/hour, $0.16$ \$/hour, and $0.32$ \$/hour
for \emph{small}, \emph{medium}, and \emph{large} instances, respectively.
Consequently, CP $i$ earns a global revenue rate that is given by the sum of the
revenue rates of individual VMs.
To run its workload, CP $i$ incurs an energy cost quantified by the \emph{energy
cost rate} (the amount of money that is paid per unit of time), which is the
energy cost resulting from the allocation of (a subset of) the workload $\calg{J}$ on
its host set $\calg{H}_i$ that must be paid per unit of time (see
\dcsSecRef{OPTIMIZATION} for a discussion on the optimization technique we use to minimize it).
We define the \emph{net profit rate} of CP $i$ as the difference
between its global revenue rate (obtained for hosting a set of VMs) and
its global energy cost rate (that it incurs to run such VMs).

Our goal is to allocate all the VMs in $\calg{J}$ on the hosts in $\calg{H}$
(independently from the corresponding CPs) in such a way to maximize
the \emph{net profit rate} of each CP $i$.
This goal can be achieved by finding the smallest set of hosts that are sufficient
to accommodate the resource shares of all the VMs in $\calg{J}$ such that the overall
energy consumption is minimized, and by providing a suitable revenue for those CPs
that host VMs belonging to other CPs.

\subsection{Issues in coalition formation}\label{EXAM1}

The most straightforward way to form a coalition\footnote{In this paper, the terms
\emph{federation} and \emph{coalition} (which is widely adopted in the game-theoretic community) are used interchangeably.} is to include in it
\emph{all} the CPs (the \emph{grand coalition}).
This solution is certainly attractive because of its simplicity and ease of
implementation, and can bring significant benefits to its participant.

To illustrate, let us consider three different CPs,
named $\text{CP}_1$, $\text{CP}_2$, and $\text{CP}_3$, whose operational scenarios
are characterized by the values shown in \dcsTabRef{tbl:exam1},
where we report the characteristics of the host classes (\dcsTabRef{tbl:exam1}(a)),
of the VM classes (\dcsTabRef{tbl:exam1}(b)), and the resource shares
of each VM class (\dcsTabRef{tbl:exam1}(c)).
\begin{table}
\centering
\caption{Operational scenarios of $\text{CP}_1$, $\text{CP}_2$, $\text{CP}_3$}\label{tbl:exam1}
\begin{tabular}{rcrc} 
\toprule
\multicolumn{4}{c}{\emph{(a) Characteristics of host classes}} \\
\multicolumn{1}{c}{Host Class} & \multicolumn{1}{c}{CPU} & \multicolumn{1}{c}{RAM}  & \multicolumn{1}{c}{$C^{\text{min}}\text{/}C^{\text{max}}$} \\
                          &                         & \multicolumn{1}{c}{(GB)} & \multicolumn{1}{c}{(W)} \\
\midrule
$1$ & $2\times$ Xeon 5130 & $16$ & $ 86.7\text{/}274.9$ \\
$2$	& Xeon X3220          & $32$ & $143.0\text{/}518.4$ \\
$3$	& $2\times$ Xeon 5160 & $64$ & $490.1\text{/}1,117.8$ \\
\bottomrule
\multicolumn{4}{c}{\emph{(b) Characteristics of VM classes}} \\
\multicolumn{1}{c}{VM Class} & \multicolumn{1}{c}{Processor} & \multicolumn{1}{c}{\#CPUs} & \multicolumn{1}{c}{RAM} \\
                          &                               &                            & \multicolumn{1}{c}{(GB)} \\
\midrule
$1$ & AMD Opteron 144 & $1$ & $1$ \\
$2$ & AMD Opteron 144 & $2$ & $2$ \\
$3$ & AMD Opteron 144 & $4$ & $4$ \\
\bottomrule
\multicolumn{4}{c}{\emph{(c)Per-VM physical resource shares}} \\ 
\multicolumn{1}{c}{Host Class} & \multicolumn{1}{c}{Class-$1$ VM} & \multicolumn{1}{c}{Class-$2$ VM} & \multicolumn{1}{c}{Class-$3$ VM} \\
\midrule
$1$ & $(0.20,0.062500)$ & $(0.4,0.12500)$ & $(0.8,0.2500)$ \\
$2$ & $(0.15,0.031250)$ & $(0.3,0.06250)$ & $(0.6,0.1250)$ \\
$3$ & $(0.10,0.015625)$ & $(0.2,0.03125)$ & $(0.3,0.0625)$ \\
\bottomrule
\end{tabular}
\end{table}
Assume, for the moment, that we have both a way to compute the set of hosts
that must be switched on to minimize energy consumption (a suitable
optimization problem is presented in \dcsSecRef{OPTIMIZATION}), and
a profit distribution rule that yields suitable revenues to CPs hosting
external VMs, so that the minimization of the energy consumption within
a federation of CPs corresponds to the maximization of their net profit rates
(such a rule is discussed in \dcsSecRef{sec:CP-game}).

Now, let us consider a simple scenario (that we name
\emph{Scenario $1$}) in which each CP owns $30$ hosts of a single class,
and in particular that $\text{CP}_1$, $\text{CP}_2$, and $\text{CP}_3$ own only class-$1$, class-$2$,
and class-$3$ hosts, respectively;
furthermore, all the CPs have the same workload
(in particular, each CP has to allocate $10$ class-$3$ VMs) and
energy cost ($0.4$ \$/kWh).

If each CP uses only its own resources and allocate its own
workload (i.e., no federation is formed), it achieves the energy cost
rate reported in \dcsTabRef{tbl:ex1}(a), where also the total
cost rate is reported.
If, conversely, they form a grand coalition (i.e, they jointly
perform a global workload allocation using the union of their respective host
sets), their corresponding individual and overall energy cost rates are reported in \dcsTabRef{tbl:ex1}(b).
\begin{table}
\centering
\caption{Scenario $1$ results}\label{tbl:ex1}
\begin{tabular}{lrrr}
\toprule
\multicolumn{1}{c}{CP} & \multicolumn{1}{c}{Powered-on} & \multicolumn{1}{c}{Consumed Power} & \multicolumn{1}{c}{Energy Cost} \\ 
	                    & \multicolumn{1}{c}{Hosts}      & \multicolumn{1}{c}{(kW)} & \multicolumn{1}{c}{(\$/hour)} \\ 
	                    
\midrule
\multicolumn{4}{c}{\emph{(a) no federation among CPs}} \\ 
\midrule
$\text{CP}_1$ & $10$ & $2.37$ & $0.95$ \\ 
$\text{CP}_2$ & $10$ & $3.68$ & $1.47$ \\ 
$\text{CP}_3$ & $ 4$ & $3.84$ & $1.54$ \\ 
\midrule
\emph{Total}   & $24$ & $9.89$ & $3.96$ \\ 
\bottomrule
\multicolumn{4}{c}{\emph{(b) federation among all the CPs}} \\ 
\midrule
$\text{CP}_1$ & $30$ & $7.12$ & $2.85$ \\ 
$\text{CP}_2$ & $ 0$ & $0.00$ & $0.00$ \\ 
$\text{CP}_3$ & $ 0$ & $0.00$ & $0.00$ \\ 
\midrule
\emph{Total}   & $30$ & $7.12$ & $2.85$ \\ 
\bottomrule
\end{tabular}
\end{table}

As can be seen from these results, the grand coalition yields a smaller
total cost rate of energy that, as discussed before in \dcsSecRef{SYSTEM}, corresponds to a larger
net profit rates for the individual CPs.
In particular, the overall
energy cost rate is reduced by $28\%$ (from $3.96$ \$/hour to $2.85$ \$/hour),
thanks to the fact that in the federation case only the hosts belonging
to $\text{CP}_1$ are used to run all the VMs.

Given the benefits resulting from the grand coalition in this example,
it is natural to speculate whether the problem of computing
a federation set really arises in practice. 

Unfortunately, as shown below,
there are cases when the grand coalition does not represent the best solution
for all the involved CPs.
Indeed, consider another scenario (that we name \emph{Scenario $2$})
where the characteristics of the hosts and of the VMs are the same as before (see
\dcsTabRef{tbl:ex2}), but
the number and type of hosts and VMs differ from those assumed in \emph{Scenario $1$}.
In particular, now $\text{CP}_1$ owns $42$ class-$2$ hosts and its workload consists in
$65$ class-$2$ VMs, while $\text{CP}_2$ and $\text{CP}_3$ own $41$ class-$3$ hosts each and both
have $61$ class-$2$ VMs as workload.

The individual and overall energy cost rates corresponding to the optimal
solution for the no federation and the grand
coalition cases are reported in \dcsTabRef{tbl:ex2}(a) and (b), respectively.
\begin{table}
\centering
\caption{Scenario $2$ results}\label{tbl:ex2}
\begin{tabular}{lrrr}
\toprule
\multicolumn{1}{c}{CP} & \multicolumn{1}{c}{Powered-on} & \multicolumn{1}{c}{Consumed Power} & \multicolumn{1}{c}{Energy Cost} \\ 
                        & \multicolumn{1}{c}{Hosts}      & \multicolumn{1}{c}{(kW)} & \multicolumn{1}{c}{(\$/hour)} \\ 
\midrule
\multicolumn{4}{c}{\emph{(a) no federation among CPs}} \\
\midrule
$\text{CP}_1$ & $22$ & $10.47$ & $ 4.19$ \\
$\text{CP}_2$ & $13$ & $14.03$ & $ 5.61$ \\ 
$\text{CP}_3$ & $13$ & $14.03$ & $ 5.61$ \\ 
\midrule
\emph{Total}   & $48$ & $38.53$ & $15.41$ \\ 
\bottomrule
\multicolumn{4}{c}{\emph{(b) federation among all CPs}} \\
\midrule
$\text{CP}_1$ & $41$ & $19.71$ & $ 7.89$ \\ 
$\text{CP}_2$ & $ 9$ & $ 9.93$ & $ 3.97$ \\ 
$\text{CP}_3$ & $ 4$ & $ 4.47$ & $ 1.79$ \\
\midrule
\emph{Total}   & $54$ & $34.11$ & $13.65$ \\ 
\bottomrule
\multicolumn{4}{c}{\emph{(c) $\text{CP}_1$ and $\text{CP}_2$ federated among them, $\text{CP}_3$ alone}} \\
\midrule
$\text{CP}_1$ & $42$ & $20.20$ & $ 8.08$ \\ 
$\text{CP}_2$ & $ 0$ & $ 0.00$ & $ 0.00$ \\ 
$\text{CP}_3$ & $13$ & $14.03$ & $ 5.61$ \\ 
\midrule
\emph{Total}   & $84$ & $34.23$ & $13.69$ \\ 
\bottomrule
\end{tabular}
\end{table}
Again, we observe a reduction of the energy cost in the grand coalition case
(see\dcsTabRef{tbl:ex2}(a) and (b)), although this reduction is smaller
than in the \emph{Scenario $1$} case (it amounts to $11.4\%$).

However, by looking at the results in \dcsTabRef{tbl:ex2}(c), we can
observe that if $\text{CP}_1$ and $\text{CP}_2$ federate among them and exclude
$\text{CP}_3$, their overall energy consumption rate \emph{is smaller} than
in the case of the grand coalition, although the overall cost involving
all three CPs is higher.
As a matter of fact, in the grand coalition the joint cost rate of
$\text{CP}_1$ and $\text{CP}_2$ amounts to  $7.89+3.97 = 11.86$ \$/hour, 
while in the sub-coalition case it amounts to $8.08$ \$/hour.
This means that $\text{CP}_1$ and $\text{CP}_2$ have an incentive to leave the
grand coalition (in case it has been formed) and to form a
federation including only them or, said in game-theoretic
words, the grand coalition is \emph{unstable} (we discuss this concept in \dcsSecRef{GAMETHEORY1}).
 
This example shows that a more sophisticated solution than simply forming
the grand coalition is necessary in order to ensure stability and that, as 
a general rule, the resulting federation set may include several
federations.

\section{The Cooperative CP Game}\label{GAMETHEORY1}

As illustrated in the previous section, a CP must consider various factors
before deciding whether to join or not a federation.
Among them, the most important ones are:
\begin{itemize}
\item \emph{Stability}: a federation is \emph{stable} if none of its participants
finds that it is more profitable to leave it (e.g., to stay alone or to join
another federation) rather than cooperating with the other ones.
\item \emph{Fairness}: when joining a federation, a CP expects that the resulting
profits are fairly divided among participants.
As unfair division leads to instability, it is necessary to design an allocation
method that ensures fairness.
\end{itemize}

In this section, we model the problem of coalition formation
as a \emph{coalition formation cooperative game with
transferable utility} \cite{Peleg-2007-CooperativeGames,Book_RAY2007}, where each CP
cooperates with the other ones in order to maximize its net profit rate,
and we present a distributed algorithm for solving this problem.

\subsection{Characterization}\label{sec:CP-game}

Given a set $\calg{N}=\{1,2,\ldots,n\}$ of CPs (henceforth also referred to as
the \emph{players}), a \emph{coalition} $\calg{S} \subseteq \calg{N}$ represents
an agreement among the CPs in $\calg{S}$ to act as a single entity (i.e.,
$\calg{S}$ is a federation of CPs).
Specifically, in this paper, a coalition $\calg{S}$ implies that the CPs
belonging to $\calg{S}$ perform a global allocation of their joint workload
$\calg{J}_\calg{S}$ by using as host set the union $\calg{H}_\calg{S}$ of their
host sets.
In other words, the CPs of the coalition act as a single CP with a load that is
the composition of the loads and with a host set that is a composition of the
host sets of the coalition.

A coalition $\calg{S}$ is associated with a \emph{revenue rate}
$r\bigl(\calg{J}_{\calg{S}}\bigr)$ and with an \emph{energy cost rate}
$e\bigl(\calg{J}_{\calg{S}}, \calg{H}_{\calg{S}}\bigr)$.
The revenue rate of $\calg{S}$ is simply the sum of revenue rates of
individual VMs $j \in \calg{J}_{\calg{S}}$, while
$e\bigl(\calg{J}_{\calg{S}}, \calg{H}_{\calg{S}}\bigr)$ can be derived by
minimizing the energy cost resulting from the allocation of the workload
$\calg{J}_{\calg{S}}$ on the host set $\calg{H}_{\calg{S}}$ (we discuss this
in \dcsSecRef{OPTIMIZATION}).

We model the system under study by using the most common form of cooperative
games, i.e., the \emph{characteristic form} \cite{Peleg-2007-CooperativeGames},
where the value of a coalition $\calg{S}$ depends solely on the members of that
coalition, with no dependence on how the players in $\calg{N}\setminus\calg{S}$
are structured (where $\calg{N}\setminus\calg{S}$ denotes the set difference).

Given the above definitions, for the system under study we define the
\emph{coalition value} $v(\cdot)$ as:
\begin{equation} \label{eq:VALUE}
v\bigl(\calg{S}\bigr) = \sum_{i\in\calg{S}}\sum_{j\in\calg{J}_i} r_i\bigl(q(j)\bigr) - e\bigl(\calg{J}_{\calg{S}}, \calg{H}_{\calg{S}}\bigr)
\end{equation}
where $q(j)$ is the class of VM $j$ (see \dcsSecRef{SYSTEM}).

The value $v\bigl(\calg{S}\bigr)$ must be divided, according to a given rule,
among the participants to $\calg{S}$.
The share $x_i(\calg{S})$ of $v\bigl(\calg{S}\bigr)$ received by CP $i$
is its \emph{payoff}, and the vector $\bfg{x}(\calg{S}) \in I \!\! R^{\lvert\calg{N}\rvert}$,
with each component $x_i(\calg{S})$ being the payoff of CP $i \in \calg{S}$, is
the \emph{payoff allocation}.

In our cooperative game, CPs seek to form coalitions in order to increase their
payoffs.
As discussed at the beginning of this section, payoffs should be fairly allocated
so that stable coalitions can form.

In order to ensure fairness in the division of payoffs, we use the
\emph{Shapley value}~\cite{SHAPLEY_53}, a solution
method that is based on the concept of \emph{marginal contribution}.~\footnote{More specifically, we use the \emph{Aumann-Dr\'eze} value~\cite{Aumann-1974-Cooperative}, which is an extension of the Shapley value for games with coalition structures.}

In particular, the Shapley value of player $i$ can be defined as:
\begin{equation}
\label{SHAPLEY}
\phi_i\bigl(v\bigr) = \sum_{\calg{S} \subseteq \calg{N} \setminus \{i\}}
\frac{\bigl\lvert\calg{S}\bigr\rvert!\bigl(n-\bigl\lvert\calg{S}\rvert-1\bigr)!}{n!} \Bigl( v\bigl(\calg{S} \cup \{i\}\bigr)-v\bigl(\calg{S}\bigr)\Bigr)
\end{equation}
where the sum is over all subsets $\calg{S}$ not containing $i$.

It is important to note that since the Shapley value for a player is based on the concept of the player's \emph{marginal contribution} to a coalition (i.e., the change in the worth of a coalition when the player joins to that coalition), the larger is the contribution provided by a player to the coalition, the higher is the payoff allocated to it.
This means that, in a given CP federation, some ``more-contributing'' CP will be rewarded by other ``less-contributing'' CPs to be enrolled in the federation.

Let us now discuss the stability.
In game theory, a typical way to guarantee stability is
to ensure that the allocation of payoffs falls in the so called
\emph{core} \cite{Peleg-2007-CooperativeGames}, that can be intuitively
defined as the set of payoff allocations that guarantees that
no group of players has an incentive to leave the coalition $\calg{N}$ to form
another coalition $\calg{S} \subset \calg{N}$.
It can be shown that a game with a non-empty core contains allocations that can be voluntarily
agreed by all players and are thereby stable, while
in a game with an empty core, some players (or groups of players)
are better off when acting alone than when cooperating all together (the grand coalition $\calg{N}$).

Unfortunately, it can be shown that the core of cooperative CP game defined
above \emph{can be empty} (see \dcsSecRef{sec:proof-emptycore} for a formal
proof of this statement).

To illustrate the effects of a game with an empty core, we return to the
\emph{Scenario $2$} example of \dcsSecRef{EXAM1}, and we compute the coalition values
and payoffs corresponding to all the federations that can be formed  by the
three involved CPs.
From these values,  that are tabulated in \dcsTabRef{tbl:side-pay-ex2}, we note that for the grand coalition
we have that $\phi_1(v)+\phi_2(v)=11.79$, while for the smaller coalition $\{1,2\}$ it results that
$v\bigl(\{1,2\}\bigr)=12.08$.
That is, for $\text{CP}_1$ and $\text{CP}_2$, the coalition $\{1,2\}$ is more
convenient than the grand coalition, or, in other words, the grand coalition
is unstable.
More formally, we can prove that, for the \emph{Scenario $2$}, the core is empty by simply solving the optimization problem shown in \dcsFigRef{opt-core-ex2}, which directly derives from the definition of the core (e.g., see \cite{Peleg-2007-CooperativeGames}), to find one of the imputations (if any) inside the core.
This optimization problem is infeasible (i.e., it does not exist any payoff vector $(x_0,x_1,x_2)$ that satisfies the core conditions) and hence the core is empty.
\begin{figure}
\centering
\begin{small}
\dcsHRule 
\begin{subequations}\label{eq:opt-core-ex2}
\begin{align*}
\text{maximize} & \,\, z = x_0+x_1+x_2 \\
\text{subject to} \nonumber \\
&  x_0 \ge 6.21, \\
&  x_1 \ge 4.15, \\
&  x_2 \ge 4.15, \\
&  x_0 + x_1 \ge 12.08, \\
&  x_0 + x_2 \ge 12.08, \\
&  x_1 + x_2 \ge 8.49, \\
&  x_0 + x_1 + x_2 = 16.27, \\
&  x_0 \ge 0, \\
&  x_1 \ge 0, \\
&  x_2 \ge 0.
\end{align*}
\end{subequations}
\dcsHRule 
\end{small}
\caption{The optimization model to test the emptiness of the core for the \emph{Scenario $2$}}\label{fig:opt-core-ex2}
\end{figure}
These results thus confirm the intuition provided in \dcsSecRef{EXAM1},
i.e., that in some situations the grand coalition may lead to instability.

\begin{table}
\centering
\caption{Scenario $2$: Coalition values and payoffs}\label{tbl:side-pay-ex2}
\begin{tabular}{crc}
\toprule
\multicolumn{1}{c}{Coalition} & \multicolumn{1}{c}{$v(\cdot)$} & \multicolumn{1}{c}{$\phi_i(v)$} \\
\midrule
$\{1\}$     & $ 6.21$ & $\{6.21\}$         \\ 
$\{2\}$     & $ 4.15$ & $\{4.15\}$         \\ 
$\{3\}$     & $ 4.15$ & $\{4.15\}$         \\ 
$\{1,2\}$   & $12.08$ & $\{7.07,5.01\}$    \\ 
$\{1,3\}$   & $12.08$ & $\{7.07,5.01\}$    \\ 
$\{2,3\}$   & $ 8.49$ & $\{4.25,4.25\}$      \\ 
$\{1,2,3\}$ & $16.27$ & $\{7.31,4.48,4.48\}$ \\ 
\bottomrule
\end{tabular}
\end{table}

It is important to note that, in general, the emptiness of the core does not depend on
the particular payoff allocation strategy, but it is instead a peculiarity of the game.
In order to solve this problem, we have thus to resort to a specific type  of
cooperative game, the so called \emph{coalition formation cooperative game} (see
\cite{APT-WITZEL_2009,DrezeGreenberg1980,Book_RAY2007})
that, as shown later, achieves stability by forming independent and disjoint smaller coalitions
when the grand coalition does not form (as in the \emph{Scenario $2$} case discussed above).

More specifically, we consider a class of coalition formation games known
as \emph{hedonic games}~\cite{BOGOMONLAIA-JACKSON_2002,DrezeGreenberg1980}.
Hedonic games can be seen as special cases of cooperative games where a player's
preferences over coalitions depend only on the composition of his coalition.
That is, players prefer being in one coalition rather than in another
one purely based on
who else is in the coalitions they belong.

Let us reformulate our cooperative CP game as a hedonic game.
Given the set $\calg{N}=\{1,2,\ldots,n\}$ of players (i.e., CPs), we define
a \emph{coalition partition} as the set
$\Pi = \{\calg{S}_1, \calg{S}_2, \ldots , \calg{S}_l \}$
that partitions the CPs' set $\calg{N}$.
That is, for $k=1,\ldots, l$, each $\calg{S}_k \subseteq \calg{N}$ is a
disjoint coalition such that $\bigcup_{k=1}^l \calg{S}_k = \calg{N}$ and
$\calg{S}_j\cap\calg{S}_k=\emptyset$ for $j\ne k$.
Given a coalition partition $\Pi$, for any CP $i \in \calg{N}$, we denote by
$\calg{S}_{\Pi}(i)$ the coalition $\calg{S}_k \in \Pi$ such that $i \in \calg{S}_k$.

To set up the coalition formation process, we need to define a
\emph{preference relation} so that each CP can order and compare all the
possible coalitions it belongs and hence it can build preferences over them.
Formally, for any CP $i \in \calg{N}$, a \emph{preference relation}
$\succeq_i$ is defined as a complete, reflexive, and transitive binary relation
over the set of all coalitions that CP $i$ can form (see~\cite{BOGOMONLAIA-JACKSON_2002}).
Specifically, for any CP $i\in\calg{N}$ and given $\calg{S}_1,\calg{S}_2\subseteq\calg{N}$,
the notation $\calg{S}_1 \succeq_i \calg{S}_2$ means that CP $i$
prefers being a member of $\calg{S}_1$ over $\calg{S}_2$ or at least $i$ prefers
both coalitions equally.
The strict counterpart of $\succeq_i$ is denoted by $\succ_i$
and implies that $i$ strictly prefers being a member of $\calg{S}_1$ over $\calg{S}_2$.
Note that the definition of a preference relation is one of the
peculiarities of the coalition formation process.
In general, this relation can be a function
of several parameters, such as the payoffs that the players receive from each
coalition, the approval of the coalition members, and the players' history,
just to name a few.

In our coalition formation CP game, for any CP $i \in \calg{N}$, we use the following preference relation:
\begin{equation}
\label{eq:PREFERENCES}
\calg{S}_1 \succeq_i \calg{S}_2 \Longleftrightarrow f_i(\calg{S}_1) \geq f_i(\calg{S}_2)
\end{equation}
where $\calg{S}_1, \calg{S}_2 \subseteq \calg{N}$ are two coalitions containing CP $i$,
and $f_i(\cdot)$ is a preference function, defined for any CP $i \in \calg{N}$
and any coalition $\calg{S}$ containing $i$, such that:
\begin{equation}
\label{eq:FORMA_f}
f_i(\calg{S}) = \begin{cases}
                  x_i(\calg{S}), & \text{if } \calg{S} \notin h(i), \\
                  -\infty, & \text{otherwise}.
                \end{cases}
\end{equation}
where $x_i(\calg{S})$ is the payoff received by CP $i$ in $\calg{S}$, and $h(i)$
is a history set where CP $i$ stores the identity of the coalition that it
visited and left in the past.
The rationale behind the use of $h(\cdot)$ is to avoid that a CP visit the same
set of coalitions twice (a similar idea has also been used in previously
published work, such as in \cite{Saad-2011-Hedonic,Saad-2011-Coalition}).
Thus, according to \dcsEqRef{FORMA_f}, each CP prefers to join to the coalition
that provides the larger payoff, unless it has already been visited and left in
the past.
The strictly counterpart $\succ_i$ of $\succeq_i$ is defined by replacing $\ge$ with $>$ in \dcsEqRef{PREFERENCES}.

\subsection{A Distributed Coalition Formation Algorithm\label{ALGORITM}}

\begin{figure}
\centering
\begin{small}
\dcsHRule 
\begin{description}
  \item[Step 0: Initialization.] \hfill \\
  At time $t=0$, the CPs are partitioned as:
  \begin{align*}
   \Pi_0 &= \Bigl\{ \bigl\{ 1 \bigr\}, \bigl\{ 2 \bigr\}, \ldots, \bigl\{ n \bigr\} \Bigr\},\\
   h(i) &= \emptyset, \quad \forall i \in \calg{N}.
  \end{align*}
  \item[Step 1: Coalition Formation Stage I.] \hfill \\
  Given the current coalition partition $\Pi_c$, each CP $i$ investigates
  possible hedonic shift operations, in order to look for a coalition
  $\calg{S}_k\in\Pi_c\cup\emptyset$ (if any) such that:
  \begin{equation*}
   \calg{S}_k \cup \{ i \} \succ_i \calg{S}_{\Pi_c}(i).
  \end{equation*}
  \item[Step 2: Coalition Formation Stage II.] \hfill \\
  If such coalition $\calg{S}_k$ is found, CP $i$ decides to perform the hedonic
  shift rule to move to $\calg{S}_k$:
  \begin{enumerate}
  \item CP $i$ updates its history $h(i)$ by adding $\calg{S}_{\Pi_c}(i)$.
  \item CP $i$ leaves its current coalition $\calg{S}_{\Pi_c}(i)$ and joins the
  new coalition $\calg{S}_k$.
  \item $\Pi_c$ is updated:
    \begin{scriptsize}
	\begin{equation*}
     \Pi_{c+1}=\Bigl( \Pi_c \setminus \bigl\{ \calg{S}_{\Pi_c}(i), \calg{S}_k\bigr\} \Bigr) \cup \Bigl\{ \calg{S}_{\Pi_c}(i) \setminus \bigl\{i\bigr\}, \calg{S}_k \cup \bigl\{i\bigr\} \Bigr\}.
	\end{equation*}
    \end{scriptsize}
  \end{enumerate}
  Otherwise, CP $i$ remains in the same coalition so that:
    \begin{scriptsize}
    \begin{equation*}
     \Pi_{c+1}=\Pi_c
    \end{equation*}
    \end{scriptsize}
  \item[Step 3: Coalition Formation Stage III.] \hfill \\
  Repeat Step 1 and Step 2 until all CPs converge to a final partition $\Pi_f$.
\end{description}
\dcsHRule 
\end{small}
\caption{The Distributed Coalition Formation Algorithm}\label{fig:ALGO}
\end{figure}

We are now ready to define our distributed algorithm for coalition formation
that allows each player to decide in a \emph{selfish} way to which coalitions to
join at any point in time.

This algorithm is based on the following \emph{hedonic shift rule}
(see \cite{Saad-2009-Selfish}): given a coalition partition
$\Pi=\{\calg{S}_1, \ldots, \calg{S}_l \}$ on the set $\calg{N}$ and a preference
relation $\succ_i$, any CP $i \in \calg{N}$ decides to leave its current
coalition $\calg{S}_{\Pi}(i)$ 
and join another coalition $\calg{S}_k \in \Pi \cup \emptyset$
(with $\calg{S}_k \neq \calg{S}_{\Pi}(i)$) if and only if
$\calg{S}_k \cup \{i \} \succ_i \calg{S}_{\Pi}(i)$.
The shift rule can be seen as a selfish decision made by a CP to move from its
current coalition to a new one, \emph{regardless of the effects of this move on
the other CPs}.

Whenever a CP $i$ applies this rule, it updates its history set $h(i)$ to store
the coalition $\calg{S}_{\Pi}(i)$ it is leaving.
After the rule is applied, the partition $\Pi$ changes into a new partition
$\Pi'$, such that:
\begin{equation}
\Pi'=\Bigl( \Pi \setminus \bigl\{ \calg{S}_{\Pi}(i), \calg{S}_k\bigr\} \Bigr) \cup \Bigl\{ \calg{S}_{\Pi}(i) \setminus \bigl\{i\bigr\}, \calg{S}_k \cup \bigl\{i\bigr\} \Bigr\}
\end{equation}

Using the hedonic shift rule and the preference relation defined in \dcsEqRef{PREFERENCES} and \dcsEqRef{FORMA_f}, we construct a distributed coalition formation algorithm, shown in
\dcsFigRef{ALGO}.

To implement the proposed algorithm in a real environment, a suitable approach
must be taken.
For a centralized approach, CPs can rely to a central coordinator, to which CPs
communicate their decisions and from which CPs obtain information to update
their local state.
For what regards a distributed approach, suitable techniques for neighbor
discovering, communication and synchronization must be used.
To this end, well-known algorithms exist in distributed and multi-agent systems literature (e.g., see \cite{Coulouris-2011-Distributed,Weiss-2013-MAS}).

It is worth noting that the presented algorithm can be executed at specific instants of time or when new VM requests arrive to CPs, thus making our coalition formation mechanism able to adapt to environmental changes.

We now prove that our algorithm always converges to a stable partition.

\begin{proposition} \label{CONVERGENZA}
Starting from any initial coalition structure $\Pi_0$, the proposed algorithm
always converges to a final partition $\Pi_f$.
\end{proposition}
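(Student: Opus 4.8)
The plan is to prove termination by showing that the algorithm of \dcsFigRef{ALGO} can execute only finitely many hedonic shift operations; as soon as no CP can perform one, the current coalition partition is, by definition, the final partition $\Pi_f$.

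First I would pin down the only way the partition can change: at \emph{Step~2} of \dcsFigRef{ALGO} some CP $i$ moves from $\calg{S}_{\Pi_c}(i)$ into a coalition $\calg{S}_k$, and simultaneously $\calg{S}_{\Pi_c}(i)$ is inserted into the history set $h(i)$, which is never emptied thereafter. Hence it suffices to bound the total number of such moves. The key mechanism is \dcsEqRef{FORMA_f}: $f_i(\calg{S})=-\infty$ whenever $\calg{S}\in h(i)$, so the shift condition $\calg{S}_k\cup\{i\}\succ_i\calg{S}_{\Pi_c}(i)$ entailed by \dcsEqRef{PREFERENCES} can never hold when $\calg{S}_k\cup\{i\}\in h(i)$ --- a CP never \emph{chooses} to re-create a coalition whose membership it has already abandoned. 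Since $h(i)$ only ever collects coalitions that contain $i$, we have $\bigl\lvert h(i)\bigr\rvert\le 2^{n-1}$ throughout the run, and the number of partitions of $\calg{N}$ is finite; combining these facts with the monotone growth of the histories yields a bound on the number of moves, and hence termination at some $\Pi_f$.

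The step I expect to be the real obstacle is making this counting airtight, because a CP's coalition can change without its consent: when a CP $j$ joins or leaves CP $i$'s coalition, CP $i$ is passively carried into a new --- possibly already visited --- coalition, so the entry it would later append to $h(i)$ need not be new, and the crude bound above is not by itself a strictly monotone quantity. A clean way to organize the proof is to exhibit a bounded, strictly monotone integer potential --- for instance $\Phi=\sum_{i\in\calg{N}}\bigl\lvert h(i)\bigr\rvert$ --- and to prove that $\Phi$ strictly increases at \emph{every} shift, i.e.\ that the coalition appended to the mover's history is genuinely new; ruling out the residual ``passive-dragging'' configurations is where one must exploit the efficiency and marginal-contribution structure of the Shapley (Aumann--Dr\'eze) payoffs appearing in \dcsEqRef{FORMA_f}, since for an arbitrary payoff rule the history mechanism alone need not preclude cycling. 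Once $\Phi$ is shown to be a bounded, strictly increasing integer, the sequence of partitions produced by the algorithm is finite and its last element is the desired $\Pi_f$.
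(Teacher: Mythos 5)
Your overall strategy---bound the number of hedonic shift operations and conclude that the last partition is $\Pi_f$---is sound in outline, and you deserve credit for spotting the precise point where the argument is delicate: a CP can be \emph{passively} returned to a coalition it has already recorded in its history, because $h(i)$ only constrains the coalitions that $i$ itself chooses to join, not the coalitions that other players re-form around $i$. The problem is that you stop exactly there. The potential $\Phi=\sum_{i\in\calg{N}}\lvert h(i)\rvert$ is \emph{not} strictly increasing at every shift, and the concrete failure is easy to exhibit: $i$ leaves $\calg{S}=\{i,j\}$ and records $\calg{S}$ in $h(i)$; later $j$ (who, when it moved, recorded only the coalition \emph{it} abandoned, e.g.\ $\{j\}$) rejoins $i$ and re-creates $\{i,j\}$; $i$ now sees $f_i(\{i,j\})=-\infty$, leaves again, and appends to $h(i)$ an element that is already there. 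In such a round no history set grows, so $\Phi$ stalls, and one can chain several such moves into a loop in which the full system state (partition plus all histories) repeats. Whether that loop is actually realizable depends on the numerical payoffs, but your proposed escape---``exploit the efficiency and marginal-contribution structure of the Shapley (Aumann--Dr\'eze) payoffs''---is a hope rather than an argument: nothing in \dcsEqRef{FORMA_f} or in the Shapley formula prevents another player from profitably re-forming a coalition that sits in $i$'s history, since each player's history filters only its own destination coalitions. As written, the lemma that carries all the weight (``the coalition appended to the mover's history is genuinely new'') is asserted to be the hard part and then left unproved, so the proposal does not establish convergence.

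For comparison, the paper's proof takes a different (and also quite informal) route: it classifies each shift by its target, claiming that a shift into a non-empty $\calg{S}_k$ always produces a partition not previously visited---so such shifts are bounded by the Bell number of $\calg{N}$---while a shift to $\emptyset$ may revisit an old partition but only with the mover newly isolated, and argues that such revisits are boundedly many. Your potential-function decomposition is genuinely different from that case analysis, and if the monotonicity of $\Phi$ (or of some corrected potential, e.g.\ one that also tracks which partitions have been visited) could be established, it would arguably be cleaner and more self-contained than the paper's dichotomy. But in its present form the proposal has the same unclosed hole that it correctly diagnoses, so it cannot be accepted as a proof.
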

\begin{proof}
The coalition formation phase can be mapped to a sequence of shift operations.
That is, according to the hedonic shift rule, every shift operation transforms
the current partition $\Pi_c$ into another partition $\Pi_{c+1}$.
Thus, starting from the initial step, the algorithms yields the following
transformations:
\begin{equation} \label{eq:SEQUENZA}
\Pi_0 \rightarrow
\Pi_1 \rightarrow
\cdots \rightarrow
\Pi_c \rightarrow \Pi_{c+1}
\end{equation}
where the symbol $\rightarrow$ denotes the application of a shift operation.
Every application of the shift rule generates two possible cases: (a)
$\calg{S}_k\ne\emptyset$, so it leads to a new coalition partition, or (b)
$\calg{S}_k=\emptyset$, so it yields a previously visited coalition
partition with a non-cooperatively CP (i.e., with a coalition of size $1$).
In the first case, the number of transformations performed by the shift rule is
finite (at most, it is equal to the number of partitions, that is the Bell
number; see~\cite{ROTA_1964}), and hence the sequence in \dcsEqRef{SEQUENZA}
will always terminate and converge to a final partition $\Pi_f$.
In the second case, starting from the previously visited partition, at certain
point in time, the non-cooperative CP must either join a new coalition and yield a
new partition, or decide to remain non-cooperative.
From this, it follows that the number of re-visited partitions will be limited,
and thus, in all the cases the coalition formation stage of the algorithm will
converge to a final partition $\Pi_f$.
\end{proof}

We address the stability of the final partition $\Pi_f$ by using the concept of
\emph{Nash-stability} (see~\cite{BOGOMONLAIA-JACKSON_2002}).
Intuitively, a partition $\Pi$ is considered Nash-stable if no CP has incentive
to move from its current coalition $\calg{S}_{\Pi}(i)$ to join a different
coalition of $\Pi$, or to act alone.
More formally, a partition $\Pi=\{ \calg{S}_1,\ldots,\calg{S}_l \}$ is
\emph{Nash-stable} if $\forall i\in\calg{N}$, $\calg{S}_{\Pi}(i)\succeq_i\calg{S}_k\cup\{i\}$
for all $\calg{S}_k \in \Pi \cup \emptyset$.

Let us show that the partition to which our algorithm converges is Nash-stable.

\begin{proposition} \label{NASH-STABLE}
Any final partition $\Pi_f$ resulting from the algorithm presented in \dcsFigRef{ALGO}
is Nash-stable.
\end{proposition}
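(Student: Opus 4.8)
The plan is to argue by contradiction, leveraging the convergence result of Proposition~\ref{CONVERGENZA} together with the bookkeeping done by the history sets $h(i)$. Suppose the algorithm terminates at a final partition $\Pi_f$ but $\Pi_f$ is \emph{not} Nash-stable. Then, by the definition of Nash-stability, there exists some CP $i\in\calg{N}$ and some coalition $\calg{S}_k\in\Pi_f\cup\emptyset$ with $\calg{S}_k\neq\calg{S}_{\Pi_f}(i)$ such that $\calg{S}_k\cup\{i\}\succ_i\calg{S}_{\Pi_f}(i)$. I would then observe that this is precisely the condition checked in Step~1 of the algorithm: if such a $\calg{S}_k$ is detected, Step~2 forces CP $i$ to perform a hedonic shift, producing a new partition $\Pi_{f+1}\neq\Pi_f$. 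Hence $\Pi_f$ could not have been the final partition --- the algorithm would not have stopped there --- which contradicts the assumption.

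The one subtlety I expect to be the crux of the argument concerns the preference function $f_i(\cdot)$ in \dcsEqRef{FORMA_f} and its interaction with the history set $h(i)$: I must make sure that the "desirable" coalition $\calg{S}_k\cup\{i\}$ the deviating CP $i$ would move to is not ruled out by $f_i(\calg{S}_k\cup\{i\})=-\infty$, i.e. that $\calg{S}_k\cup\{i\}\notin h(i)$. Here I would use the fact that, by \dcsEqRef{FORMA_f}, $f_i(\calg{S}_{\Pi_f}(i))$ is finite at termination: indeed if $\calg{S}_{\Pi_f}(i)\in h(i)$ we would have $f_i(\calg{S}_{\Pi_f}(i))=-\infty$, and since $f_i$ is real-valued (a payoff) on any coalition not in $h(i)$, the strict preference $\calg{S}_k\cup\{i\}\succ_i\calg{S}_{\Pi_f}(i)$ combined with $f_i(\calg{S}_{\Pi_f}(i))$ being finite forces $f_i(\calg{S}_k\cup\{i\})>f_i(\calg{S}_{\Pi_f}(i))>-\infty$, so $\calg{S}_k\cup\{i\}\notin h(i)$. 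This means the deviation is actually admissible under Step~1's test, closing the gap.

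Finally I would tie the two pieces together: by Proposition~\ref{CONVERGENZA} the algorithm \emph{does} terminate at some $\Pi_f$, and termination by construction of Step~3 means no CP finds a profitable admissible shift; combined with the observation above that any Nash-violation would be exactly such an admissible profitable shift, we conclude that at $\Pi_f$ no CP $i$ has $\calg{S}_k\cup\{i\}\succ_i\calg{S}_{\Pi_f}(i)$ for any $\calg{S}_k\in\Pi_f\cup\emptyset$, i.e. $\calg{S}_{\Pi_f}(i)\succeq_i\calg{S}_k\cup\{i\}$ for all such $\calg{S}_k$, which is precisely Nash-stability. The main obstacle, as noted, is the history-set caveat --- one has to rule out the pathological situation in which $i$ would prefer to return to a coalition it has already abandoned, and the finiteness-of-payoff bookkeeping is what takes care of it.
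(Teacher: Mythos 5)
Your proposal is correct and follows essentially the same contradiction argument as the paper: a Nash-stability violation would be exactly an admissible hedonic shift, so the algorithm could not have terminated at $\Pi_f$. Your extra care about the history set is fine but can be shortened --- the strict preference $f_i(\calg{S}_k\cup\{i\})>f_i(\calg{S}_{\Pi_f}(i))$ already forces $f_i(\calg{S}_k\cup\{i\})\neq-\infty$ and hence $\calg{S}_k\cup\{i\}\notin h(i)$, without needing the (not fully justified) claim that $f_i(\calg{S}_{\Pi_f}(i))$ is finite at termination.
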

\begin{proof}
To show this, we use the proof by contradiction technique.
Assume that the final partition $\Pi_f$ is not Nash-stable.
Consequently, there exists a CP $i \in \calg{N}$ and a coalition
$\calg{S}_k\in\Pi_f\cup\emptyset$ such that $\calg{S}_k\cup\{i\}\succ_i\calg{S}_{\Pi_f}(i)$.
Then, CP $i$ will perform a hedonic shift operation and hence
$\Pi_f \rightarrow \Pi'_f$.
This contradicts the assumption that $\Pi_f$ is the final outcome of our
algorithm.
\end{proof}

The Nash-stability also implies the so called \emph{individual-stability}
(see \cite{BOGOMONLAIA-JACKSON_2002}).
A partition $\Pi=\{ \calg{S}_1,\ldots,\calg{S}_l \}$ is
\emph{individually-stable} if do not exist a player $i \in \calg{N}$ and a
coalition $\calg{S}_k \in \Pi \cup \emptyset$ such that
$\calg{S}_k\cup\{i\}\succ_i\calg{S}_{\Pi}(i)$ and
$\calg{S}_k\cup\{i\}\succeq_j\calg{S}_k$ for all $j \in \calg{S}_k$.

It is worth noting that Nash-stability only captures the notion of stability with respect to movements of single CPs (i.e., no CP has an incentive to unilaterally deviate).
However, it does not guarantee the stability with respect to other aspects that are instead captured by other stability concepts.
For instance, the stability with respect to movements of groups of CPs is captured by the \emph{core}-stability (see \cite{BOGOMONLAIA-JACKSON_2002}), whereby no group of CPs can collectively defect and form a new coalition where each of them is better off.
Unfortunately, the two stability concepts are not related each other, in general (i.e., one stability concept does not necessarily imply the other one).
Moreover, there exist other stronger stability concepts but unfortunately there is not warranty that a satisfying partition does exist (e.g., see \cite{Aziz-2012-Existence}) and the check for the existence is computationally hard (e.g., see \cite{Aziz-2013-Computing}).
Finally, Nash-stability does not guarantee the maximization of the overall net profit (e.g., the \emph{social optimum} in the game-theoretic jargon) \cite{Hasan-2013-Nash}.
Despite all of that, Nash-stability is generally considered a reasonable trade-off.

Thus, we can conclude that our algorithm always converges to a partition $\Pi_f$
which is both Nash-stable and individually stable.

\subsection{Computation of the Coalition Value}\label{OPTIMIZATION}

To use the game-theoretic model discussed in the previous section, we need a way
to find (for a given coalition) the optimal workload allocation (i.e., the
allocation that minimizes the energy cost), that allows us to compute the
coalition value.

To this end, we define a \emph{Mixed Integer Linear Program} (MILP)
modeling the problem of allocating a set $\calg{J}_{\calg{S}}$ of VMs
onto a set $\calg{H}_{\calg{S}}$ of hosts
so that the hourly energy cost is minimized.

We base our MILP on the model described in \cite{BORGETTO2011}, that has been
first revised to improve its computational performance and then extended in
order to incorporate the heterogeneity of physical resources and the energy
cost.

The resulting optimization model is shown in \dcsFigRef{opt-boundedyield}, where
we use the same notation introduced in \dcsSecRef{SYSTEM},\footnote{To ease
readability, we simplify it by denoting with $\calg{J}$ and $\calg{H}$ the cloud
workload and the host set, respectively (i.e., we omit the dependence by
$\calg{S}$).} and we denote with $o(i)$ the function that is $1$ if host
$i$ is powered on and $0$ otherwise (i.e., a function $o:\calg{H} \to \{0,1\}$),
with $L_k$ and $S_k$ the power (in W) consumed during the switch-on and
switch-off operations of a host of class $k$, respectively, with $G_{i_1,i_2,v}$
the hourly cost (in \$/hour) to migrate a VM of class $v$ from CP $i_1$ to CP
$i_2$, with $E_i$ the hourly cost (in \$/Wh) of the energy consumed by a host
belonging to CP $i$, with $c(i)$ the function that gives the CP that owns
host $i$ (i.e, a function $c:\calg{H} \to \calg{N}$), and with $h(j)$ the
function that gives the host where VM $j$ is allocated (i.e., a function
$h:\calg{J} \to \calg{H}$).

In the optimization model we define, for any VM $j \in \calg{J}$ and host
$i \in \calg{H}$, the following decision variables: $b_{ji}$ is a binary
variable that is equal to $1$ if VM $j$ is allocated to host $i$;
$\alpha_{i}$ is a real variable representing the overall fraction of CPU
assigned to all VMs allocated on host $i$; $p_i$ is a binary variable that
is equal to $1$ if host $i$ is powered on.
The objective function $e\bigl(\calg{J},\calg{H}\bigr)$ (hereafter, $e$
for short) represents, for a specific assignment of decision variables, the
hourly energy cost (in \$/hour) due to the power consumption induced by the
federation of CPs to host the given VMs.
\begin{figure}
\centering
\begin{small}
\dcsHRule \\
\begin{subequations}\label{eq:opt-boundedyield}
\begin{align}
\text{minimize} & \,\, e = \sum_{i \in \calg{H}}\Bigl[p_i C_{g(i)}^\text{min} + \alpha_{i}\bigl(C_{g(i)}^\text{max}-C_{g(i)}^\text{min}\bigr) \notag \\
                &  \quad                        + p_i (1-o(i)) L_{g(i)} + (1-p_i) o(i) S_{g(i)}\Bigr]E_{c(i)} \notag \\
				&  \quad                        + \sum_{j\in\calg{J}}b_{ji} G_{c(h(j)),c(i),q(j)} \label{eq:opt-boundedyield-obj}\\
\text{subject to} \nonumber \\
& \sum_{i \in \calg{H}}{b_{ji}} = 1, \qquad \qquad \quad j \in \calg{J}, \label{eq:opt-boundedyield-c4}\\
& \sum_{j\in\calg{J}} b_{ji} \le |\calg{J}| p_i, \,\,  \qquad \quad i \in \calg{H}, \label{eq:opt-boundedyield-c6}\\
& \alpha_{i} \le p_i, \, \qquad \qquad \qquad \quad i \in \calg{H}, \label{eq:opt-boundedyield-c8}\\
& \sum_{j \in \calg{J}}{b_{ji}M_{q(j)g(i)}} \le p_i, \quad i \in \calg{H},\label{eq:opt-boundedyield-c9}\\
& \sum_{j \in \calg{J}}{b_{ji}A_{q(j)g(i)}} = \alpha_{i}, \quad i \in \calg{H}, \label{eq:opt-boundedyield-c10}\\
& b_{ji} \in \lbrace 0,1 \rbrace, \qquad \qquad \qquad j \in \calg{J}, i \in \calg{H}, \label{eq:opt-boundedyield-c1}\\
& \alpha_{i} \in \bigl[0,1\bigr], \qquad \qquad \qquad i \in \calg{H}, \label{eq:opt-boundedyield-c2}\\
& p_{i} \in \lbrace 0,1 \rbrace, \qquad \qquad \qquad \; i \in \calg{H}. \label{eq:opt-boundedyield-c3}
\end{align}
\end{subequations}
\dcsHRule \\
\end{small}
\caption{The VM allocation optimization model}\label{fig:opt-boundedyield}
\end{figure}

The resulting VM allocation is bound to the following constraints:
\begin{itemize}
\item \dcsEqRef{opt-boundedyield-c4} imposes that each VM is hosted by exactly
one host;
\item \dcsEqRef{opt-boundedyield-c6} states that only hosts that are switched on
can have VMs allocated to them; the purpose of these constraints is to avoid
that a VM is allocated to a host that will be powered off;
\item \dcsEqRef{opt-boundedyield-c8} ensures that (1) the CPU resource of a
powered-on host is not exceeded, and (2) that no CPU resource is consumed on a
host that will be powered off;
\item \dcsEqRef{opt-boundedyield-c9} assures that (1) the RAM resource of a
powered-on host is not exceeded, (2) that VMs hosted on that host receive their
required amount of RAM, and (3) that no RAM resource is consumed on a host that
will be powered off;
\item \dcsEqRef{opt-boundedyield-c10} states that all VMs must exactly obtain
the amount of CPU resource they require;
\item \dcsEqRef{opt-boundedyield-c1}, \dcsEqRef{opt-boundedyield-c2}, and
\dcsEqRef{opt-boundedyield-c3} define the domain of decision variables
$b_{ji}$, $\alpha_i$, and $p_i$, respectively.
\end{itemize}
As in \cite{BORGETTO2011}, in order to keep into considerations QoS requirements
related to each class of VMs, we assumed that each VM $j$ exactly obtains the
amount of CPU $\mathit{CPU}_{q(j)}$ and RAM $\mathit{RAM}_{q(j)}$ as defined by
its class $q(j)$ (see \dcsSecRef{SYSTEM}).

\section{Experimental Evaluation}\label{NUMERICAL}

To illustrate the effectiveness of our algorithm for coalition formation,
we perform a set of experiments in which we compute the federation set
for various scenarios including a population of distinct CPs.

In all scenarios,
we consider $4$ CPs, whose infrastructures are characterized as reported in
\dcsTabRef{tbl:exp-cp-conf},
and we use the same host classes, VM classes and VM shares as the ones defined in
\dcsTabRef{tbl:exam1}.
\begin{table}[hbtp]
\centering
\caption{Experimental evaluation -- Configuration of CPs}\label{tbl:exp-cp-conf}
\begin{tabular}{crrr}
\toprule
\multicolumn{1}{c}{CP} & \multicolumn{3}{c}{\# Hosts} \\
                       & \multicolumn{1}{c}{Class-$1$} & \multicolumn{1}{c}{Class-$2$} & \multicolumn{1}{c}{Class-$3$} \\
\midrule
$\text{CP}_1$ & $40$ & $ 0$ & $ 0$ \\ 
$\text{CP}_2$ & $ 0$ & $40$ & $ 0$ \\ 
$\text{CP}_3$ & $ 0$ & $ 0$ & $40$ \\ 
$\text{CP}_4$ & $15$ & $15$ & $10$ \\ 
\bottomrule
\end{tabular}
\end{table}
We also assume that all CPs use the same revenue rate policy, that is they earn
$0.08$ \$/hour for class-$1$ VMs, $0.16$ \$/hour for class-$2$ VMs, and $0.32$
\$/hour for class-$3$ VMs.
Furthermore, without loss of generality, we also assume that the electricity price
is the same for all CPs and it is equal to $0.4$ \$/kWh.

Starting from this configuration, we set up $400$ scenarios that differ from
each other in the workload of the various CPs, in the power state of each host
and in the VM migration costs.
Specifically, in each scenario the workload of each CP is set 
by randomly generating the number of VMs of each class as 
an integer number uniformly distributed in the
$[0,20]$ interval.
In addition, to provide values to function $o(\cdot)$, we randomly generate the
power state (i.e., ON or OFF) of each host according to a Bernoulli distribution
with parameter $0.5$.
Furthermore, the values for $L_k$ and $S_k$, for each host class $k$, are
computed as the product of the electricity price, the maximum power consumption
and the time taken to complete the switch-on or switch-off operation.
This switch-on/-off time is randomly generated for each host class according
to a Normal distribution with mean of $300$ $\mu\text{sec}$ and standard
deviation (S.D.) of $50$ $\mu\text{sec}$ (e.g., see
\cite{Meisner-2009-PowerNap}).
Finally, the VM migration costs $G_{c_1,c_2,k}$ from CP $c_1$ to CP $c_2$ for
each VM class $k$ are computed as the product between the data transfer cost
rate, the data size to transfer and the time to migrate a VM of class $k$ from
CP $c_1$ to CP $c_2$, and assuming that our algorithm activates every $12$ hours.
The data transfer cost rate is taken from the Amazon EC2 data transfer pricing
\cite{EC2} and set to $0.001$ \$/GB.
Furthermore, we suppose that data are persistently transferred during the
migration time at a fixed data rate of $100$ Mbit/sec for all CPs.
For what concerns the migration time, we assume that it is randomly generated
according to a Normal distribution with mean of $277$ sec and S.D.\ of $182$ sec
for VMs of class $1$, with mean of $554$ sec and S.D of $364$ sec for VMs of
class $2$, and with mean of $1108$ sec and S.D.\ of $728$ sec for VMs of class
$3$ (e.g., see \cite{Akoush-2010-Predicting}).
The migration cost between hosts of the same CP is assumed to be negligible.

For each one the above scenario, we compute the federation set of the
involved CPs by using an ad-hoc simulator written in \texttt{C++} and 
interfaced with CPLEX \cite{CPLEX} to solve the various instances of the
optimization model of \dcsSecRef{OPTIMIZATION}.

In the rest of this section, we first present a summary of the performance
obtained by our algorithm over all scenarios, and then, we illustrate its
behavior by showing its run trace for one of these scenarios.

In \dcsFigRef{exp-gains}, we compare the performance of each scenario in terms
of energy saving and net profit obtained with our algorithm with respect to the
\emph{no-federation} case (i.e., when CPs work in isolation).
Specifically, the figures show, for each scenario, the percentage of the
reduction of energy consumption (see \dcsFigRef{exp-gains-en}) and of the increment
of net profit (see \dcsFigRef{exp-gains-pr}) that all CPs obtain when they
federate according to our algorithm with respect to the case of working
individually.
As can be seen from the figures, our algorithm, with respect to always work
non-cooperatively, allows the CPs to reduce the overall consumed energy from
$11.3\%$ to $33.6\%$ (with an average of $21.6\%$), and to increment the overall
net profit from $5.1\%$ to $20.1\%$ (with an average of $10.5\%$).
\begin{figure}
\centering
\centering
\subfloat[][Reduction of energy consumption (\%)\label{fig:exp-gains-en}]{
\centering
\includegraphics[scale=.50]{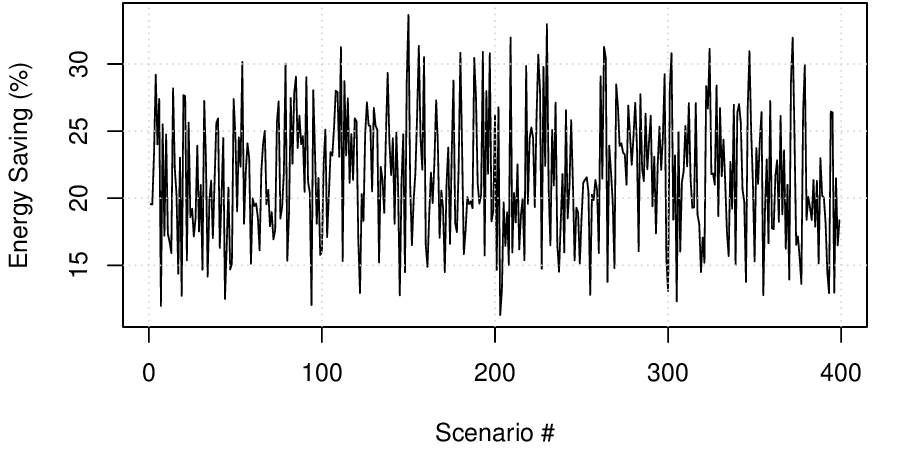}
}
\\
\subfloat[][Increase of net profit (\%)\label{fig:exp-gains-pr}]{
\centering
\includegraphics[scale=.50]{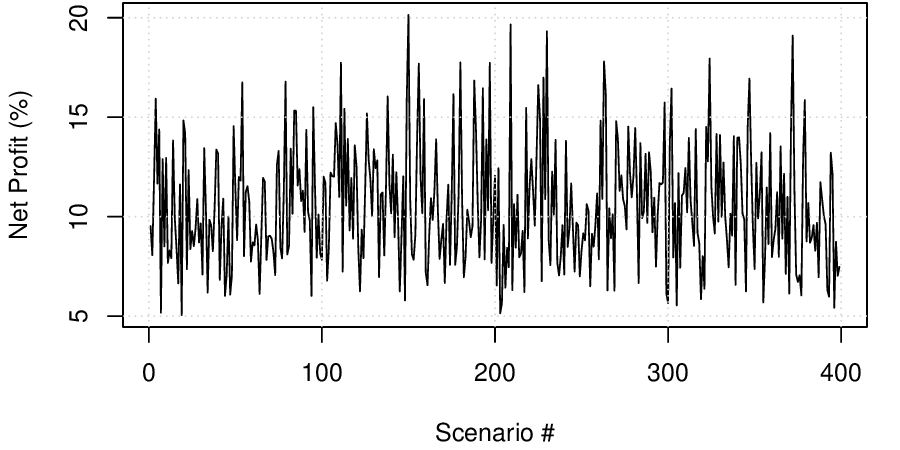}
}
\caption{Performance of our algorithm with respect to the no-federation case}\label{fig:exp-gains}
\end{figure}

We can also analyze the benefits provided by our algorithm from the point of view
of each CP.
Results from our experiments show that, from the CP perspective, the formation
of federations yielded by our algorithm is always non-detrimental.
Specifically, it results that, on
average, the net profit earned by $\text{CP}_1$, $\text{CP}_2$, $\text{CP}_3$
and $\text{CP}_4$ increases by nearly $18.0\%$, $8.5\%$, $22.8\%$ and $4.3\%$ with
respect to the no-federation case, respectively.

Finally, to illustrate how our algorithm works, we present the run trace for
a single scenario, whose characteristics are reported in
\dcsTabRef{tbl:exp-vms}.~\footnote{Due to lack of space, we only report the
number of VMs.
}
We select this scenario to illustrate the behavior of the algorithm when there
are multiple Nash-stable partitions.~\footnote{Note, our algorithm's output is always a single partition.}
\begin{table}
\centering
\caption{Experimental results -- Workload of CPs in the case study}\label{tbl:exp-vms}
\begin{tabular}{crrr}
\toprule
\multicolumn{1}{c}{CP} & \multicolumn{3}{c}{\# VMs} \\
                       & \multicolumn{1}{c}{Class-$1$} & \multicolumn{1}{c}{Class-$2$} & \multicolumn{1}{c}{Class-$3$} \\
\midrule
$\text{CP}_1$ & $ 0$ & $12$ & $13$ \\
$\text{CP}_2$ & $18$ & $ 5$ & $11$ \\
$\text{CP}_3$ & $17$ & $18$ & $11$ \\
$\text{CP}_4$ & $ 3$ & $ 2$ & $ 0$ \\
\bottomrule
\end{tabular}
\end{table}

For this investigation, we show in \dcsTabRef{tbl:exp-coals} all possible
partitions together with the value function $v(\cdot)$ of every coalition inside
each partition, and the corresponding Shapley values.~\footnote{Note, our algorithm does
not necessarily enumerate all of such partitions.}
From the table, we can see that there are two Nash-stable coalitions, namely
$\bigl\{1,2,3,4\bigr\}$ and $\bigl\{ \{1,3\}, \{2,4\} \bigr\}$.
To arrive to one of these partitions, our algorithm works as follows.
Starting from partition $\bigl\{\{1\},\{2\},\{3\},\{4\}\bigr\}$ (i.e.,
every CP works individually), there are two different sequences of hedonic shift
rules:
\begin{itemize}
\item Sequence $\#1$: {\small $\bigl\{\{1\},\{2\},\{3\},\{4\}\bigr\} \stackrel{3}{\longrightarrow} \bigl\{\{1,3\},\{2\},\{4\}\bigr\} \stackrel{2}{\longrightarrow} \bigl\{\{1,2,3\},\{4\}\bigr\} \stackrel{4}{\longrightarrow}\bigl\{\{1,2,3,4\}\bigr\}$}
\item Sequence $\#2$: {\small $\bigl\{\{1\},\{2\},\{3\},\{4\}\bigr\} \stackrel{3}{\longrightarrow} \bigl\{\{1,3\},\{2\},\{4\}\bigr\} \stackrel{2}{\longrightarrow} \bigl\{\{1,3\},\{2,4\}\bigr\}$}
\end{itemize}
where the index on top of each arrow denotes the CP that performs the
corresponding hedonic shift rule.
\begin{table*}
\centering
\caption{Experimental results -- Coalition values and Shapley values for all the 15 different partitions of the case study}\label{tbl:exp-coals}
\begin{tabular}{cccc}
\toprule
\multicolumn{1}{c}{$\Pi=\bigl\{\calg{S}_1,\ldots,\calg{S}_l\bigr\}$} & \multicolumn{1}{c}{$\Bigl\{v\bigl(\calg{S}_1\bigr),\ldots,v\bigl(\calg{S}_l\bigr)\Bigr\}$} & \multicolumn{1}{c}{$\sum_{\calg{S}_i\in\Pi}v(\calg{S}_i)$} & \multicolumn{1}{c}{$\bigr\{\phi_{\calg{S}_1},\ldots,\phi_{\calg{S}_l}\bigr\}$} \\
\midrule
$\Bigl\{\bigl\{1\bigr\},\bigl\{2\bigr\},\bigl\{3\bigr\},\bigl\{4\bigr\}\Bigr\}$ & $\bigl\{4.28,3.45,3.84,0.38\bigr\}$ & $11.95$ & $\Bigl\{\bigl\{4.28\bigr\},\bigl\{3.45\bigr\},\bigl\{3.84\bigr\},\bigl\{0.38\bigr\}\Bigr\}$ \\
$\Bigl\{\bigl\{1,2\bigr\},\bigl\{3\bigr\},\bigl\{4\bigr\}\Bigr\}$               & $\bigl\{8.22,3.84,0.38\bigr\}$      & $12.44$ & $\Bigl\{\bigl\{4.52,3.70\bigr\},\bigl\{3.84\bigr\},\bigl\{0.38\bigr\}\Bigr\}$ \\
$\Bigl\{\bigl\{1,3\bigr\},\bigl\{2\bigr\},\bigl\{4\bigr\}\Bigr\}$               & $\bigl\{9.59,3.45,0.38\bigr\}$      & $13.42$ & $\Bigl\{\bigl\{5.01,4.57\bigr\},\bigl\{3.45\bigr\},\bigl\{0.38\bigr\}\Bigr\}$ \\
$\Bigl\{\bigl\{1\bigr\},\bigl\{2,3\bigr\},\bigl\{4\bigr\}\Bigr\}$               & $\bigl\{4.28,7.82,0.38\bigr\}$      & $12.48$ & $\Bigl\{\bigl\{4.28\bigr\},\bigl\{3.72,4.10\bigr\},\bigl\{0.38\bigr\}\Bigr\}$ \\
$\Bigl\{\bigl\{1,4\bigl\},\bigl\{2\bigr\},\bigl\{3\bigr\}\Bigr\}$               & $\bigl\{4.69,3.45,3.84\bigr\}$      & $11.98$ & $\Bigl\{\bigl\{4.29,0.40\bigr\},\bigl\{3.45\bigr\},\bigl\{3.84\bigr\}\Bigr\}$ \\
$\Bigl\{\bigl\{1\bigr\},\bigl\{2,4\bigr\},\bigl\{3\bigr\}\Bigr\}$               & $\bigl\{4.28,4.33,3.84\bigr\}$      & $12.45$ & $\Bigl\{\bigl\{4.28\bigr\},\bigl\{3.70,0.63\bigr\},\bigl\{3.84\bigr\}\Bigr\}$ \\
$\Bigl\{\bigl\{1\bigr\},\bigl\{2\bigr\},\bigl\{3,4\bigr\}\Bigr\}$               & $\bigl\{4.28,3.45,5.43\bigr\}$      & $13.16$ & $\Bigl\{\bigl\{4.28\bigr\},\bigl\{3.45\bigr\},\bigl\{4.44,0.99\bigr\}\Bigr\}$ \\
$\Bigl\{\bigl\{1,2,3\bigr\},\bigl\{4\bigr\}\Bigr\}$                             & $\bigl\{13.27,0.38\bigr\}$          & $13.65$ & $\Bigl\{\bigl\{5.00,3.70,4.57\bigr\},\bigl\{0.38\bigr\}\Bigr\}$ \\
$\Bigl\{\bigl\{1,2,4\bigr\},\bigl\{3\bigr\}\Bigr\}$                             & $\bigl\{8.69,3.84\bigr\}$           & $12.53$ & $\Bigl\{\bigl\{4.39,3.80,0.50\bigr\},\bigl\{3.84\bigr\}\Bigr\}$ \\
$\Bigl\{\bigl\{1,2\bigr\},\bigl\{3,4\bigr\}\Bigr\}$                             & $\bigl\{8.22,5.43\bigr\}$           & $13.65$ & $\Bigl\{\bigl\{4.52,3.70\bigr\},\bigl\{4.44,0.99\bigr\}\Bigr\}$ \\
$\Bigl\{\bigl\{1,3,4\bigr\},\bigl\{2\bigr\}\Bigr\}$                             & $\bigl\{10.01,3.45\bigr\}$          & $13.46$ & $\Bigl\{\bigl\{4.63,4.78,0.60\bigr\},\bigl\{3.45\bigr\}\Bigr\}$ \\
$\Bigl\{\bigl\{1,3\bigr\},\bigl\{2,4\bigr\}\Bigr\}$                             & $\bigl\{9.59,4.33\bigr\}$           & $13.92$ & $\Bigl\{\bigl\{5.01,4.57\bigr\},\bigl\{3.70,0.63\bigr\}\Bigr\}$ \\
$\Bigl\{\bigl\{1,4\bigr\},\bigl\{2,3\bigr\}\Bigr\}$                             & $\bigl\{4.69,7.82\bigr\}$           & $12.51$ & $\Bigl\{\bigl\{4.29,0.40\bigr\},\bigl\{3.72,4.10\bigr\}\Bigr\}$ \\
$\Bigl\{\bigl\{1\bigr\},\bigl\{2,3,4\bigr\}\Bigr\}$                             & $\bigl\{4.28,8.88\bigr\}$           & $13.16$ & $\Bigl\{\bigl\{4.28\bigr\},\bigl\{3.62,4.36,0.90\bigr\}\Bigr\}$ \\
$\Bigl\{1,2,3,4\Bigr\}$                                                         & $\bigl\{14.01\bigr\}$               & $14.01$ & $\Bigl\{4.78,3.78,4.76,0.68\Bigr\}$ \\
\bottomrule
\end{tabular}
\end{table*}

Regardless what partition is finally selected, from the third column of
\dcsTabRef{tbl:exp-coals} we can also observe that, for this scenario, the
partition value improvement for both Nash-stable partitions with respect to
the non-cooperative behavior (i.e., partition
$\bigl\{\{1\},\{2\},\{3\},\{4\}\bigr\}$) is about $10\%$ for partition
$\bigl\{\{1,3\},\{2,4\}\bigr\}$ and nearly $17\%$ for the grand-coalition.

\section{Related Works} \label{sec:related}

Recently, the concept of cloud federations
\cite{rochwerger:reservoir-computer,moreno:iaas-cloud} has been proposed as a
way to provide individual CPs with more flexibility when allocating on-demand
workloads.
Existing work on cloud federations has been mainly focused on the development
of architectural models for federations \cite{ferrer:optimis}, and of mechanisms
providing specific functionalities (e.g., workload management
\cite{balava,larsson:monitoring}, accounting and billing
\cite{elmroth:accounting}, and pricing
\cite{Hassan-2012-Cooperative,Kunsemoller-2011-Game,mihailescu:pricing,Toosi-2011-Resource}).

To the best of our knowledge, very little work has been carried out to jointly
tackle the problem of dynamically forming stable cloud federations for
energy-aware resource provisioning.
Indeed, much of the existing work only focuses on a single aspect of the problem.
In \cite{Goiri-2012-Economical}, the design and implementation of a VM scheduler
for a federation of CPs is presented.
The scheduler, in addition to manage resources that are local to each CP, is
able to decide when to rent resources from other CPs, when to lease own idle
resources to other CPs, and when to turn on or off local physical resources.
Unlike our work, this work does not consider the problem of forming stable CP
federations.
In \cite{Niyato-2011-Cooperative}, a cooperative game-theoretic model for
federation formation and VM management is proposed.
In this work, the federation formation among CPs is analyzed using the concept
of network games, but the energy minimization problem is not considered.

In \cite{Mashayekhy-2012-Coalitional}, a profit-maximizing game-based mechanism
to enable dynamic cloud federation formation is proposed.
The dynamic federation formation problem is modeled as a hedonic game (like our
approach), and the federations are computed by means of a merge-split algorithm.
There are several important differences between this and our works: (1) we focus
on the stability of individuals rather than of groups, (2) we
propose a decentralized algorithm, (3) we demonstrate the stability of the
obtained federations, and (4) we use the Shapley value instead of the normalized
Banzhaf value (as in \cite{Mashayekhy-2012-Coalitional}), since the latter does
not satisfy some important properties \cite{VanDenBrink-1998-Axiomatizations}.

In \cite{Samaan-2013-Novel}, the problem of sharing unused capacity in a
federation of CPs for VM spot market is formulated as a non-cooperative repeated
game.
Specifically, by using a Markov model to predict future non-spot workload, the
authors introduce a set of capacity sharing strategies that maximize the
federation's long-term revenue and propose a dynamic programming algorithm to
find the allocation rules needed to achieve it.
Our work can complement this approach by providing a solution to the formation
of CP federations for non-spot VM instances.

\section{Conclusions and Future Works\label{CONCL}}

This paper investigates a novel dynamic federation scheme among 
a set of CPs. 
To this end, we propose a cooperative game-theoretic framework to study the
federation formation problem, and a mathematical optimization model to allocate
CP workload in an energy-aware fashion, in order to reduce CP energy costs.

In the proposed scheme, we model the cooperation among the CPs
as a coalition game with transferable utility and we devise a distributed
hedonic shift algorithm for coalition formation.
With the proposed algorithm, each CP individually decides whether to leave
the current coalition to join a different one according to his preference,
meanwhile improving the perceived net profit.
Furthermore, we prove that the proposed algorithm converges to a Nash-stable
partition which determines the resulting coalition structure.
Numerical results show the effectiveness of our approach.

The future developments of this research is following several directions.
First of all, we would like to enhance the coalition value function in order to
account for possible request losses due to lack of physical resources.
Furthermore, we want to improve the game-theoretic and optimization models in
order to include costs in terms of loss of revenues as well as other aspects
like the ones related to trustworthiness among CPs.


As a second direction, we plan to integrate the long-term
resource provisioning solution proposed in this paper with other short-term
and medium-term resource management strategies (e.g.,
\cite{Albano-2013,Guazzone-2012-Exploiting}) to improve resource utilization and meet
application-level performance requirements, and with techniques for incremental
VM migration (e.g., \cite{Verma-2008-pMapper}).

Finally, we want to implement and validate the proposed algorithm in a real testbed.

\bibliographystyle{IEEEtran}
\bibliography{IEEEabrv,paper-full}

\appendix

\section{The Core of the Cooperative CP Game can be Empty}\label{sec:proof-emptycore}

In this section, we present a more formal proof of the possible emptiness of the core of the cooperative CP game defined in \dcsSecRef{sec:CP-game}.
To do so, we use the proof by construction technique, by providing an instance of the game for which the core is empty.

In cooperative game theory, the \emph{Bondareva-Shapley theorem} provides the necessary and sufficient conditions for the non-emptiness of the core solution concept~\cite{Peleg-2007-CooperativeGames}.
\begin{thm}[Bondareva-Shapley theorem]\label{thm:bondareva-shapley}
Given a cooperative game $\langle\calg{N},v\rangle$, the core of $\langle\calg{N},v\rangle$ is non-empty if and only if for every function $\alpha : 2^\calg{N} \setminus \{\emptyset\} \to [0,1]$ where
\[
\forall i \in \calg{N} : \sum_{\calg{S} \in 2^\calg{N} : \; i \in \calg{S}} \alpha\bigl(\calg{S}\bigr) = 1
\]
the following condition holds:
\begin{equation}\label{eq:bondareva-shapley}
\sum_{\calg{S} \in 2^\calg{N}\setminus\{\emptyset\}} \alpha\bigl(\calg{S}\bigr) v\bigl(\calg{S}\bigr) \leq v\bigl(\calg{N}\bigr). 
\end{equation}
\end{thm}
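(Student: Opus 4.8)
The plan is to derive this from linear-programming duality, the classical route to the Bondareva-Shapley theorem. First I would recast non-emptiness of the core as a statement about an optimization problem. Recall that a payoff vector $\bfg{x}$ (indexed by $\calg{N}$) lies in the core of $\langle\calg{N},v\rangle$ exactly when $\sum_{i\in\calg{N}}x_i=v(\calg{N})$ and $\sum_{i\in\calg{S}}x_i\ge v(\calg{S})$ for every non-empty $\calg{S}\subseteq\calg{N}$. Consider the linear program
\[
(\mathrm{P}):\quad z^\star=\min_{\bfg{x}}\ \sum_{i\in\calg{N}}x_i\quad\text{s.t.}\quad\sum_{i\in\calg{S}}x_i\ge v(\calg{S}),\quad \calg{S}\in 2^{\calg{N}}\setminus\{\emptyset\},
\]
where the $x_i$ are unconstrained in sign. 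The constraint indexed by $\calg{S}=\calg{N}$ forces $\sum_{i\in\calg{N}}x_i\ge v(\calg{N})$ on any feasible point, and $(\mathrm{P})$ is obviously feasible, so it has a finite optimum with $z^\star\ge v(\calg{N})$. I would then note that the core is non-empty if and only if $z^\star=v(\calg{N})$: when $z^\star=v(\calg{N})$, an optimal $\bfg{x}^\star$ satisfies all the inequality constraints and has $\sum_i x_i^\star=v(\calg{N})$, so it is a core element; conversely any core element is feasible for $(\mathrm{P})$ with objective $v(\calg{N})$, forcing $z^\star\le v(\calg{N})$, hence equality. The subtle point here is that membership in the core requires the efficiency \emph{equality} $\sum_i x_i=v(\calg{N})$, whereas $(\mathrm{P})$ only imposes the inequality; the two coincide at the optimum, which is what makes the reformulation valid.

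Next I would dualize. Attaching a nonnegative variable $\alpha(\calg{S})$ to the constraint indexed by $\calg{S}$, and using that the primal variables are free (so the dual constraints become equalities), the dual of $(\mathrm{P})$ is
\[
(\mathrm{D}):\quad \max_{\alpha\ge 0}\ \sum_{\calg{S}\in 2^{\calg{N}}\setminus\{\emptyset\}}\alpha(\calg{S})\,v(\calg{S})\quad\text{s.t.}\quad\sum_{\calg{S}:\,i\in\calg{S}}\alpha(\calg{S})=1,\quad i\in\calg{N}.
\]
Its feasible region is exactly the set of ``balanced weight systems'' $\alpha$ in the theorem statement; moreover the equality constraints force $0\le\alpha(\calg{S})\le 1$ automatically, so restricting the codomain of $\alpha$ to $[0,1]$ as in the statement costs nothing. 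This region is non-empty (take $\alpha(\{i\})=1$ for every $i$ and $\alpha(\calg{S})=0$ otherwise), so by strong duality for linear programs --- both $(\mathrm{P})$ and $(\mathrm{D})$ being feasible --- the optimal value of $(\mathrm{D})$ is also $z^\star$.

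Finally I would assemble the equivalences: the core is non-empty $\iff z^\star=v(\calg{N})\iff z^\star\le v(\calg{N})\iff$ every feasible $\alpha$ of $(\mathrm{D})$ satisfies $\sum_{\calg{S}}\alpha(\calg{S})v(\calg{S})\le v(\calg{N})$, which is precisely condition \dcsEqRef{bondareva-shapley}; the inequality $z^\star\ge v(\calg{N})$ used above can also be read off on the dual side from the feasible $\alpha$ with $\alpha(\calg{N})=1$ and all other weights zero. The main obstacle I anticipate is not any computation but getting the duality bookkeeping exactly right --- pairing primal constraints with dual variables, reading off equality dual constraints from the sign-free primal variables, and invoking strong duality only after verifying feasibility of both programs --- together with the efficiency-equality-versus-inequality point flagged in the first step.
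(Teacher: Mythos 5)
The paper does not prove this statement at all: it quotes the Bondareva--Shapley theorem as a known result and cites \cite{Peleg-2007-CooperativeGames} for it, so there is no in-paper proof to compare against. Your linear-programming-duality argument is the standard textbook proof and is correct: the reformulation of core non-emptiness as $z^\star=v(\calg{N})$ handles the efficiency-equality issue properly, the dual is written down correctly (free primal variables giving equality dual constraints, and the normalization $\sum_{\calg{S}\ni i}\alpha(\calg{S})=1$ forcing $\alpha(\calg{S})\le 1$ so the $[0,1]$ codomain in the statement is no restriction), and both programs are verified feasible before strong duality is invoked. No gaps.
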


We now show that for the cooperative CP game there exists at least one counterexample that violates the conditions \dcsEqRef{bondareva-shapley} of the Bondareva-Shapley theorem.

Let us consider a simple scenario consisting of the same host and VM classes as defined in \dcsSecRef{EXAM1}, and of three CPs, whose characteristics are reported in \dcsTabRef{tbl:app-cipcat}.
\begin{table}
\centering
\caption{Configuration of CPs}\label{tbl:app-cipcat}
\begin{tabular}{crrr}
\toprule
CP & \multicolumn{3}{c}{\# Hosts} \\
   & \multicolumn{1}{c}{Class-$1$} & \multicolumn{1}{c}{Class-$2$} & \multicolumn{1}{c}{Class-$3$} \\
\midrule
$\text{CP}_1$ & $0$ & $2$ & $0$ \\
$\text{CP}_2$ & $1$ & $0$ & $0$ \\
$\text{CP}_3$ & $1$ & $0$ & $0$ \\
\midrule
CP & \multicolumn{3}{c}{\# VMs} \\
   & \multicolumn{1}{c}{Class-$1$} & \multicolumn{1}{c}{Class-$2$} & \multicolumn{1}{c}{Class-$3$} \\
\midrule
$\text{CP}_1$ & $0$ & $4$ & $0$ \\
$\text{CP}_2$ & $0$ & $1$ & $0$ \\
$\text{CP}_3$ & $0$ & $1$ & $0$ \\
\midrule
CP & \multicolumn{3}{c}{Energy Cost}\\
   & \multicolumn{3}{c}{(\$/kWh)} \\
\midrule
$\text{CP}_1$ & \multicolumn{3}{r}{$0.4$} \\
$\text{CP}_2$ & \multicolumn{3}{r}{$0.4$} \\
$\text{CP}_3$ & \multicolumn{3}{r}{$0.4$} \\
\bottomrule
\end{tabular}
\end{table}

We build the cooperative CP game $\langle\calg{N},v\rangle$ where $\calg{N}=\{1,2,3\}$ is the set of CPs and $v(\cdot)$ is the same characteristic function defined in \dcsEqRef{VALUE}.
In \dcsTabRef{tbl:app-coalitions}, we show the enumeration of all possible CP coalitions for this game along with their values.
To compute the coalition value we use the same revenue rate described in \dcsSecRef{NUMERICAL}, that is $0.08$ \$/hour for class-$1$ VMs, $0.16$ \$/hour for
class-$2$ VMs, and $0.32$ \$/hour for class-$3$ VMs.
\begin{table}
\centering
\caption{Values of $v(\cdot)$ for CPs coalitions}\label{tbl:app-coalitions}
\begin{tabular}{cr}
\toprule
Coalitions $\calg{S}$ & \multicolumn{1}{c}{$v\bigl(\calg{S}\bigr)$} \\
                      & \multicolumn{1}{c}{(\$/hour)}\\
\midrule
$\bigl\{1\bigr\}$ & $0.345$ \\
$\bigl\{2\bigr\}$ & $0.095$ \\
$\bigl\{3\bigr\}$ & $0.095$ \\
$\bigl\{1,2\bigr\}$ & $0.513$ \\
$\bigl\{1,3\bigr\}$ & $0.513$ \\
$\bigl\{2,3\bigr\}$ & $0.225$ \\
$\bigl\{1,2,3\bigr\}$ & $0.623$ \\
\bottomrule
\end{tabular}
\end{table}

Let us choose as function $\alpha(\cdot)$ in \ref{thm:bondareva-shapley} the following function:
\begin{equation*}
\alpha\bigl(\calg{S}\bigr) = \begin{cases}
								\frac{1}{2}, & \calg{S}\in\biggl\{\bigl\{1,2\bigr\},\bigl\{1,3\bigr\},\bigl\{2,3\bigr\}\biggr\},\\
								0, & \text{otherwise}.
							 \end{cases}
\end{equation*}
If the core is non-empty, \dcsEqRef{bondareva-shapley} would hold.
However, it results that:
\begin{equation*}
v\bigl(\{1,2,3\}\bigr)<\frac{1}{2}\cdot\biggl(v\bigl(\{1,2\}\bigr)+v\bigl(\{1,3\}\bigr)+v\bigl(\{2,3\}\bigr)\biggr)
\end{equation*}
That is:
\begin{align*}
0.623 &< \frac{1}{2}\cdot(0.513+0.513+0.225),\\
0.623 &< 0.625.
\end{align*}
which clearly violates the conditions \dcsEqRef{bondareva-shapley} of the Bondareva-Shapley theorem and hence the core for this game is empty.
\qed

\end{document}